\newtheorem{theorem}{Theorem}
\newtheorem{proposition}{Proposition}
\title{Model-X Change-Point Detection of Conditional Distribution
}
\author{%
  Zhuofan Dong\thanks{Equal Contribution} \\
  School of Mathematical Sciences \\
  Beijing Normal University \\
  \texttt{202211999077@mail.bnu.edu.cn} \\
  \And
  Yiwen Huang\footnotemark[1]  \\
  Department of Statistics \\
  Peking University \\
  \texttt{2401210080@stu.pku.edu.cn} \\
  \And
  Yan Dong\footnotemark[1] \\
  International Institute of Finance, School of Management \\
  University of Science and Technology of China \\
  \texttt{apriloa@mail.ustc.edu.cn} \\
  \And
  Mengying Yan \\
  Department of Biostatistics \& Bioinformatics \\
  Duke University School of Medicine \\
  \texttt{mengying.yan@duke.edu} \\
  \And
  Ziye Tian \\
  Department of Biostatistics \& Bioinformatics \\
  Duke University School of Medicine \\
  \texttt{ziye.tian@duke.edu} \\
  \And
   Chuan Hong\\
   Department of Biostatistics \& Bioinformatics\\
   Duke University School of Medicine\\
   \texttt{chuan.hong@duke.edu} \\
  \And
  Doudou Zhou\thanks{Corresponding author. To whom correspondence should be addressed.} \\
  Department of Statistics and Data Science \\
  National University of Singapore \\
  \texttt{ddzhou@nus.edu.sg} \\
  \And
  Molei Liu \footnotemark[2]\\
  Department of Biostatistics, Peking University Health Science Center \\
  Beijing International Center for Mathematical Research, Peking University \\
  \texttt{moleiliu@bjmu.edu.cn} \\
}
\begin{document}
\maketitle

\begin{abstract}
The dynamic nature of many real-world systems can lead to temporal outcome model shifts, causing a deterioration in model accuracy and reliability over time. This requires change-point detection on the outcome models to guide model retraining and adjustments. However, inferring the change point of conditional models is more prone to loss of validity or power than classic detection problems for marginal distributions. This is due to both the temporal covariate shift and the complexity of the outcome model. Also, the existing method of conditional change points detection both have many limitations including linear assumption and low dimension prerequisite which sometimes is not suitable for real world application. To address these challenges, we propose a novel \textbf{M}odel-X chang\textbf{E}-point detectio\textbf{N} of conditional \textbf{D}istribution (MEND)  method computationally enhanced with distillation function for simultaneous change-point detection and localization of the conditional outcome model. We extend and combine our model with neural network to accommodate complex nonlinear and high dimensional situation, which is proved to be valid in both simulation and real data.  Theoretical validity of the proposed method is justified. Extensive simulation studies and two real-world examples demonstrate the statistical effectiveness and computational scalability of our method as well as its significant improvements over existing methods.
\end{abstract}


 \section{Introduction}
\label{sec:intro}

\subsection{Background}

The advancement of experimental technologies has made large-scale data featuring high-dimensional covariates readily available. Modern developed machine learning techniques have leveraged such big data to develop prediction models for uncovering underlying patterns and facilitating data-driven decision-making. However, due to the dynamic nature of many real-world systems, model performances may shift over time, leading to inaccuracies in models \citep{davis2020detection}. For example, a case study on Acute Kidney Injury (AKI) found that seven parallel machine learning models, trained using data from 2003, showed deteriorating model performance when evaluated over the next nine years in a validation cohort \citep{davis2017calibration}.  Therefore, it becomes necessary to develop a change point detection method for detecting shifts in model performance, which could indicate that the current model is no longer effective.  

In recent years, there has been a growing literature on the change point detection methods for identifying the possible changes in the distribution properties of temporal data sequences. However, most existing change point detection methods predominantly focus on identifying changes in the marginal distribution of the observations. These methods are not applicable for predictive modeling tasks, of which the primary interest lies in understanding whether and when the relationship between a response and covariates. For example, changes in the conditional distribution of health outcomes given clinical indicators \citep{gao2025causal}, or stock returns conditional on market factors \cite{vogt2015}, may significantly affect the reliability of predictive models over time. Similar issues are also encountered in environmental analysis \citep{Cai2000}, power systems \citep{Chen2016}, and meteorology \citep{Harris2022}. 


Furthermore, many existing change-point detection methods suffer from certain limitations, making them less adaptable to real-world applications. To ensure statistical validity, studying changes in dependency structures often requires a large number of samples, while the data themselves are typically high-dimensional. For example, in gene expression analysis, financial asset co-movement modeling, and multi-modal sensor signal monitoring, the dimensionality of the data is often extremely high, and the variables exhibit complex nonlinear or higher-order dependencies. In such cases, traditional methods that rely on linear assumptions or low-dimensional approximations tend to fail in robustly identifying the true change-point structure.

At present, the majority of existing change-point detection methods are constructed under explicit modeling assumptions; for example, the OLS-CUSUM \citep{ploberger1992cusum} test is based on a linear regression model. We want to develop a framework that does not rely on any assumptions about the underlying data distribution, thus enabling broad applicability to a wide range of both linear and nonlinear complex cases.

Therefore, our goal is to develop a method for detecting and localizing change points in conditional distributions under high-dimensional, complex settings with limited labeled data.

\subsection{Our contribution}

First, In this paper, we propose the \textbf{M}odel-X chang\textbf{E}-point detectio\textbf{N} of conditional \textbf{D}istribution (MEND). Our model leverages both labeled and unlabeled data to comprehensively extract informative patterns and to enhance statistical validity under limited supervision. Inheriting the attractive properties of the model-X framework, the proposed approach does not impose any specific structural or distributional assumptions on the conditional distribution of responses given covariates, thereby ensuring robustness across diverse linear and nonlinear scenarios.

Second, to address high-dimensional and complex data settings, we introduce a neural-network-based functional encoder as a distilled representation module. Distinct from the explicit and fixed functional constructions adopted in dCRT, our encoder is data-adaptive: its layer depth, activation functions, and architecture are tailored to the data characteristics, pretrained, and then frozen as a fixed distillation function. Unlike the dCRT’s distillation that mainly aims to reduce computational cost, our encoder’s primary purpose is to map covariates into a time-invariant latent subspace, where the intrinsic structure of the covariates remains stable over time. When encountering change point, the discrepancy between these stable latent representations and the evolving responses becomes amplified, thereby substantially enhancing the model’s sensitivity to distributional shifts.

Finally, for scenarios with at most a single change point, we propose a simplified version of our framework that eliminates unnecessary network structures and complex test statistics while maintaining high detection sensitivity. This lightweight version offers a more efficient and interpretable alternative, achieving comparable detection accuracy with dramatically reduced computational cost when the number of change points is known to be at most one.

\section{Problem setup and notation}

Unlike traditional time-series problems where each time point corresponds to a single observation, our study focuses on identifying the time points of dependency changes in a setting where multiple independent and identically distributed (i.i.d.) samples are observed at each time point, and the number of samples may vary across time. Assume we observe data over a total time horizon of $T$. At each time point $t \in \{1, \ldots, T\}$, we independently sample $n_t$ observations, forming a dataset 
\[
D_t = \{ (Y_i^{[t]}, \mathbf{X}_i^{[t]}) : i = 1, \ldots, n_t \},
\]
where $Y_i^{[t]} \in \mathbb{R}$ denotes the response variable and $\mathbf{X}_i^{[t]} \in \mathbb{R}^p$ represents the corresponding covariates.  
We denote by $D = \{ (Y_i, \mathbf{X}_i, R_i) \}$ the overall collection of samples, where $R_i$ indicates the time index associated with the $i$-th sample, i.e., $(Y_i, \mathbf{X}_i) \in D_t$ implies $R_i = t$.
Our goal is to test whether the conditional distribution of $Y$ given $\mathbf{X}$ remains invariant across different time points. Formally, the null hypothesis is
\begin{equation}
    H_0 : \; p_{Y|\mathbf{X},R}(y) = p_{Y|\mathbf{X}}(y), \quad \forall R \in \{1, \ldots, T\}, \label{H0}
\end{equation}

that is, the dependency structure between $Y$ and $\mathbf{X}$ does not change over time. In other words, it means that  \( Y \) is conditionally independent of \( R \) given \( \mathbf{X} \), denoted as \( Y \perp\!\!\!\perp R \mid \mathbf{X} \). 
The corresponding alternative hypothesis is
\begin{equation}
 \begin{aligned}
    H_1 : \; \exists \, 1 \le \tau < T, \; \text{s.t.} \;
p_{Y|\mathbf{X},R}(y) =
\begin{cases}
p_{Y|\mathbf{X}}(y), & R = \tau-1, \\
q_{Y|\mathbf{X}}(y), & R = \tau,
\end{cases} 
 \end{aligned}
  \label{H1}
\end{equation}
where $p_{Y|\mathbf{X}} \neq q_{Y|\mathbf{X}}.$ We can see that the proposed assumptions are inherently broad and encompass both the multiple-change-point and the single-change-point settings. More specifically, the alternative hypothesis $H_1$ is simply the complement of the null hypothesis $H_0$, which naturally accommodates scenarios with either a single change point or multiple change points.
In conclusion, we aim to identify potential change points $\tau$ where the conditional relationship between $Y$ and $\mathbf{X}$ undergoes a distributional shift.

\section{Method}
\label{sec:method}

\subsection{Overview of MEND}

We build our testing framework \textbf{MEND} upon the Conditional Randomization Test (CRT), originally introduced by \citet{Candes2018PanningFG}, belonging to the broader model-X framework. The CRT provides an exact (non-asymptotic) control of the Type-I error for conditional independence testing, under the assumption that the conditional distribution \( R \mid \mathbf{X} \) is known (or can be accurately estimated).

Operationally, CRT relies on a user-defined test statistic \( S \), which can be designed to capture task-specific structure. 
Given that the validity of CRT is invariant to the choice of \( S \), we are free to incorporate complex, possibly nonlinear functions such as neural-network–based encoders to distill informative representations from high-dimensional data. 
CRT then constructs counterfactual datasets by resampling \( R_i \) while preserving the conditional structure of \( R_i \mid \mathbf{X}_i \). 
For each resampled dataset, the test statistic \( S \) is recomputed, and the \( p \)-value is obtained by ranking the observed statistic among the resampled values. 
The procedure is summarized in Algorithm~\ref{alg:crt_modified}, and its formal validity is established in Theorem~\ref{theo:valid_test}, proof can be found in Appendix \ref{appendix:proof}.


\begin{algorithm}
\caption{\label{alg:crt_modified} Conditional Randomization Test (CRT) Framework}
\begin{algorithmic}[1]
\STATE \textbf{Input:} A set of \( n \) independent samples \(D= \{(Y_i, \mathbf{X_i},R_i)\}_{1 \leq i \leq n}\). Test statistic \( S(D) \), number of randomization \(K\), and nominal significance level \( \alpha \in (0,1) \).  

\STATE \textbf{Output:} A \( p \)-value for testing $ H_0$ against $H_1$ defined in (\ref{H0}) and (\ref{H1}).

\STATE \textbf{Randomization:}
\FOR{\( k = 1, 2, \dots, K ~\&~ i=1,\dots,n\)}
    \STATE Randomly generate \( R_i^{(k)} \) from \( p_{R|\mathbf{X_i} }(t) \) independently from $(Y_i,R_i)$, while keeping \( \mathbf{X_i} \) and \( Y_i \) unchanged, resulting in the resampled data set \(D^{(k)} = \{(Y_i, \mathbf{X_i},R^{(k)}_i )\}_{1 \leq i \leq n}\). 
\ENDFOR

\STATE \textbf{Compute $p$-value:}
\[
p\text{-value} = \frac{1}{K+1} \left( 1 + \sum_{k=1}^K \mathbb{I}(S(D^{(k)}) \geq S(D)) \right),
\]
where \(\mathbb{I}( \cdot )\) is the indicator function. Reject $H_0$ if  \( p \)-value $\leq \alpha$.  
\end{algorithmic}
\end{algorithm}

\begin{theorem}[Strict validity of CRT]
\label{theo:valid_test}
When \( H_0: p_{Y|\mathbf{X},R}(y) = p_{Y|\mathbf{X}}(y) \) holds for \( R = 1, \ldots, T \),
\begin{align*}
    \lim_{n \to \infty} \mathbb{P}(\textnormal{Algorithm \ref{alg:crt_modified} rejects}) &\leq \alpha.
\end{align*}
\end{theorem}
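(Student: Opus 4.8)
The plan is to reduce the claim to the exchangeability argument underpinning the model-X CRT and then absorb the two sources of slack that make the bound asymptotic rather than exact. The starting observation is that the null \eqref{H0} is exactly the conditional independence statement $Y\indep R\mid\X$; hence conditioning additionally on the outcome leaves the law of the label unchanged, $R_i\mid(\X_i,Y_i)\sim p_{R\mid\X_i}$, and this is the only place where $H_0$ enters the proof.

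\textbf{Exchangeability and super-uniformity.} Viewing the $R_i$ as conditionally independent draws given the $\X_i$ — the sorted block layout in the problem setup is merely a relabeling of an i.i.d.\ sample, with the counts $n_t$ concentrating around their expectations — we have, conditionally on $\{(\X_i,Y_i)\}_{i=1}^n$, that the observed vector $\mathbf R^{(0)}:=(R_1,\dots,R_n)$ and the resampled vectors $\mathbf R^{(k)}:=(R_1^{(k)},\dots,R_n^{(k)})$, $k=1,\dots,K$, are i.i.d.\ draws from $\prod_{i=1}^n p_{R\mid\X_i}$: Algorithm~\ref{alg:crt_modified} draws each $R_i^{(k)}$ from precisely this conditional, independently across $i$ and $k$ and of $(Y_i,R_i)$. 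Holding the $(\X_i,Y_i)$ fixed in position, the tuple $(\Dsc^{(0)}:=\Dsc,\Dsc^{(1)},\dots,\Dsc^{(K)})$ is therefore exchangeable, and so is $\bigl(S(\Dsc^{(0)}),\dots,S(\Dsc^{(K)})\bigr)$ for any fixed measurable $S$ — even one that exploits the change-point structure of the labels. The standard rank computation for the $p$-value in Algorithm~\ref{alg:crt_modified} then yields $\P(p\text{-value}\le\alpha)\le\lfloor\alpha(K+1)\rfloor/(K+1)\le\alpha$ for every $n$ and $K$, hence also in the limit $n\to\infty$.

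\textbf{The main obstacle.} The one substantive step is that the resampling above uses the true $p_{R\mid\X}$, whereas in practice it is replaced by an estimate $\hat p_{R\mid\X}$ fit from the auxiliary unlabeled sample $\{(\X_i,R_i)\}_{i=n+1}^N$, which is independent of $\Dsc$; the exchangeability is then exact only for the idealized procedure. Letting $\mathcal R$ denote the rejection event and conditioning on $\hat p_{R\mid\X}$ and $\{(\X_i,Y_i)\}$, the implemented and idealized procedures differ only through the product resampling kernels, so
\[
\bigl|\,\P_{\hat p}(\mathcal R)-\P_{p}(\mathcal R)\,\bigr|\;\le\;K\,\mathbb{E}\!\left[\sum_{i=1}^n d_{\mathrm{TV}}\!\bigl(\hat p_{R\mid\X_i},\,p_{R\mid\X_i}\bigr)\right],
\]
and, combined with the oracle bound, $\limsup_{n\to\infty}\P_{\hat p}(\mathcal R)\le\alpha$ follows as soon as this aggregate total-variation error is $o(1)$. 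This is where the real work lies: one must avoid a crude coordinatewise union bound — which would scale like $n$ — by exploiting the across-$i$ independence of the resampled labels together with a quantitative $L_1$/total-variation consistency rate for $\hat p_{R\mid\X}$ that is fast enough relative to how the unlabeled sample size grows with $n$. The same device also absorbs the residual discrepancy from treating the counts $n_t$ as fixed rather than random, which renders the independent resampling only approximately faithful but contributes a total-variation term of the same, vanishing, order.
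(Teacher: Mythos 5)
Your first part is essentially the paper's own proof: the paper establishes (via its Proposition in the appendix) that under $H_0$ the resampled labels have the same conditional law as the observed ones given $(Y,\X)$, and then invokes the standard rank/super-uniformity argument to get $\P(p\text{-value}\le\alpha)\le\alpha$ exactly for every $n$, so the limit statement is immediate; your explicit appeal to joint exchangeability of $(S(\Dsc),S(\Dsc^{(1)}),\dots,S(\Dsc^{(K)}))$ is, if anything, slightly more careful than the paper's pairwise-equality phrasing. Your second part, on absorbing the error from an estimated $\hat p_{R\mid\X}$, addresses something the paper does not prove at all — Theorem \ref{theo:valid_test} and Algorithm \ref{alg:crt_modified} resample from the true $p_{R\mid\X_i}$ — so the fact that you leave that total-variation step unfinished is not a gap relative to the paper's claim, though it is a fair observation that the asymptotic phrasing of the theorem hints at an estimated-model-X extension the paper never formalizes.
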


In constructing the test statistic, the covariates $\mathbf{X}$ are often high-dimensional.
Directly performing the Conditional Randomization Test (CRT) resampling and computing the statistic $S(D^{(k)})$ in a high-dimensional space 
would result in heavy computational cost and may also lead to instability.
To address this, we follow the idea of the distilled CRT (dCRT) proposed by \citet{liu2022fast},
and introduce a distillation function $\phi$ to achieve dimensionality reduction and information extraction.
Without using the time label $R$, the function $\phi$ extracts predictive information in $\mathbf{X}$ that is highly correlated with $Y$,
so that the original statistic $S(D) = S(Y, \mathbf{X}, R)$ can be simplified to 
$T(Y, \mathbf{d}, R) = T(Y, \phi(\mathbf{X}), R)$,
which significantly reduces the computational burden during CRT resampling.

In contrast to the original dCRT framework, which requires manually constructing explicit distillation functions, the design of such functions heavily relies on prior knowledge of the data structure and is often difficult to generalize to complex settings. Moreover, the classical dCRT is primarily motivated by alleviating the computational burden induced by high-dimensional data.
In our model, the proposed encoder $\phi$ is implemented as a multi-layer neural network that requires no distributional or functional assumptions. Through pretraining and subsequent freezing, $\phi$ defines an implicit mapping without any closed-form expression. A key purpose of this encoder is to project $\mathbf{X}$ into a low-dimensional latent space with improved structural stability, such that discrepancies between $\mathbf{X}$ and $Y$ before and after a change point are effectively amplified, thereby enhancing the sensitivity of the detection
procedure.

More precisely, suppose the number of change points is $k-1$, so that the entire time series is
partitioned into $k$ segments. We associate each segment with a parameter vector
$\{\omega_1,\ldots,\omega_k\}$. If an observation $(\mathbf{x}_j, y_j)$ belongs to the $i$-th
segment, its expected response is modeled as
\[
    y_j = \omega_i^\top\phi(\mathbf{x}_j) + \epsilon_j,
\]
where $\epsilon_j$ denotes random noise.

From an encoder--decoder perspective, the proposed stable encoder $\phi$ maps $\mathbf{X}$ into
a low-dimensional and structurally invariant latent space, while different linear decoders
$\omega_1,\ldots,\omega_k$ capture segment-specific structures and generate predictions for $Y$.
This multi-decoder design enables the model to flexibly characterize subtle structural differences
across segments while sharing a globally stable encoder, thus improving change-point
sensitivity without sacrificing representation stability.
 Such pipeline is summarized in Algorithm~\ref{alg:crt_dist}.

The depth and activation functions of the multilayer neural network $\phi$ 
are chosen based on the characteristics of the data.
Since our study focuses on multivariate time-series data,
we adopt a four-layer fully connected neural network with ReLU activations applied to the first two layers to alleviate gradient vanishing.
For other data modalities, the network design can vary:
CNNs may be used for image data to capture spatial structure,
while GNNs are suitable for graph-structured or highly correlated features.

\begin{algorithm}[t]
\caption{\label{alg:crt_dist} Distilled Conditional Randomization Test (CRT) Framework}
\begin{algorithmic}[1]
\STATE \textbf{Input:} A set of \( n \) independent samples \( D= \{(Y_i, \mathbf{X_i} ,R_i)\}_{1 \leq i \leq n}\). Test statistic \( S(D) \), number of randomization \(K\), and nominal significance level \( \alpha \in (0,1) \). Neural network \(\phi\)  

\STATE \textbf{Output:} A \( p \)-value for testing $ H_0$ against $H_1$ defined in (\ref{H0}) and (\ref{H1}).

\vspace{0.8ex}
\textbf{Stage 1: Pretraining the encoder}\\
\vspace{0.8ex}
\STATE Initiate distillation network: \(\phi\)  

\STATE Train encoder $\phi$ on $\{(\mathbf{x}_i, y_i)\}_{i=1}^n$ by iterating the EM algorithm (\ref{app:alg1})  until converged;  \\

\vspace{0.8ex}
\textbf{Stage 2: Conduct CRT}\\
\vspace{0.8ex}

\STATE \textbf{Randomization:}
\FOR{\( k = 1, 2, \dots, K ~\&~ i=1,\dots,n\)}
    \STATE Randomly generate \( R_i^{(k)} \) from \( p_{R|\mathbf{X_i} }(t) \) independently from $(Y_i,R_i)$, while keeping \( \mathbf{X_i} \) and \( Y_i \) unchanged, resulting in the resampled data set \(D'^{(k)} = \{(Y_i, \phi(\mathbf{X_i}),R^{(k)}_i )\}_{1 \leq i \leq n}\). 
\ENDFOR

\STATE \textbf{Compute $p$-value:}
\[
p\text{-value} = \frac{1}{K+1} \left( 1 + \sum_{k=1}^K \mathbb{I}(S(D'^{(k)}) \geq S(D')) \right),
\]
where \(\mathbb{I}( \cdot )\) is the indicator function. Reject $H_0$ if  \( p \)-value $\leq \alpha$.  
\end{algorithmic}
\end{algorithm}

\subsection{Training the network}

Assuming $K-1$ change points, we introduce a latent variable $E_i \in \{1, \ldots, K\}$ 
for each sample $(\mathbf{x}_i, y_i)$ to represent the temporal segment to which it belongs.
We define the responsibility $p_{ik}$ as the posterior probability that sample $i$ belongs to segment $k$.
Hence, the overall expected output can be expressed as: 
\[
\hat{y_i } = \sum_{k=1}^{K} p_{ik}\, \omega_k^\top \phi(\mathbf{x}_i).
\]
Conditioned on $E_i = k$, the model simplifies to a local linear decoder:
\[
y_i \mid E_i = k = \omega_k^\top \phi(\mathbf{x}_i) + \varepsilon_k, \quad \varepsilon_k \sim \mathcal{N}(0, \sigma_k^2),
\]
or equivalently,
\[
y_i - \omega_k^\top \phi(\mathbf{x}_i) \mid E_i = k \sim \mathcal{N}(0, \sigma_k^2).
\]
In other words, the model can be regarded as a Latent Mixture Model (LMM),
where $\phi$ acts as a shared encoder and $\omega_k$ represents local linear decoders for different temporal segments.

Under this assumption, the encoder $\phi$ is pretrained. More specifically, we adopt the classical EM algorithm to perform iterative optimization. In the E-step, we compute the posterior responsibilities for each sample based on the current parameters. In the M-step, we fix these responsibilities and update $\phi$ and $\{\omega_i\}$ via gradient-based optimization, while the noise variances $\{\sigma_i^2\}$ admit closed-form updates. The detailed procedures are provided
in the Appendix \ref{app:alg1}.

The E-step and M-step are iterated until convergence (e.g., when parameter changes fall below a threshold).
In scenarios with strong nonlinearity or severe temporal drift, 
regularization methods such as Invariant Risk Minimization (IRM) can be incorporated 
to further encourage $\phi$ to learn stable, time-invariant representations (see Appendix \ref{app:alg1} for details).

Our framework enjoys two complementary forms of validity. First, the latent mixture
structure ensures that change-point estimation remains valid even when the encoder
$\phi$ is misspecified. The EM updates depend only on the mixture likelihood and the
segment-wise decoders, so the latent segmentation is identifiable without requiring the representation model to be correct. Second, if a CRT-style randomization test is employed for inference, its model-agnostic construction guarantees valid $p$-values irrespective of representation error. Together, these properties provide robustness at both the estimation level (recovering the correct segmentation) and the inference level (controlling Type-I error), thereby ensuring overall validity of the proposed framework.

\subsection{Construction of the Test Statistic $S(D)$}

After obtaining the trained distillation network $\phi$, 
we freeze its parameters and regard it as a fixed mapping, referred to as the \emph{distillation function}.
For each sample, we define
\[
\mathbf{Z}_i = \phi(\mathbf{X}_i) \in \mathbb{R}^p, 
\qquad 
\{\mathbf{Z}_t, Y_t\} = \{(\mathbf{Z}_i, Y_i) \mid R_i = t\},
\]
where $\mathbf{Z}_t$ represents the set of latent representations at time $t$.

During training, recall that each decoder is a single-layer linear mapping.
When training is sufficient, the nonlinearity between $\mathbf{X}$ and $Y$ has been fully extracted into $\phi$, 
so that the latent representation $\mathbf{Z}$ and the response $Y$ can be well approximated by a linear regression model.

To construct the test statistic $S(D)=S(Y,R,\mathbf{Z})$ for detecting change points, we make the following assumptions:
\begin{enumerate}
    \item Within the same time segment $t$, all samples $(\mathbf{Z}_i, Y_i)$ share a common regression coefficient $\beta_t$.
    \item If $t$ corresponds to a change point, the regression coefficients before and after $t$ differ significantly.
\end{enumerate}

Since multiple samples are available at each time point, 
after obtaining $\phi$, we estimate time-varying regression coefficients $\{\beta_t\}_{t=1}^T$
by solving a fused-regularized loss:

\begin{equation}
 \begin{aligned}
        \min_{\{\beta_t\}_{t=1}^T}
\sum_{t=1}^{T}&\frac{1}{2n_t}\|\mathbf{y}_t - \mathbf{Z}_t \beta_t\|_2^2
+ \lambda_{\mathrm{fuse}} \sum_{t=2}^{T}\|\beta_t - \beta_{t-1}\|_2
\\
&+ \frac{\lambda_{\mathrm{ridge}}}{2}\sum_{t=1}^{T}\|\beta_t\|_2^2
+ \lambda_{1}\sum_{t=1}^{T}\|\beta_t\|_1.
 \end{aligned}
\end{equation}

Here, $\lambda_{\mathrm{fuse}}$ encourages piecewise constant coefficients over time, 
while $\lambda_{\mathrm{ridge}}$ and $\lambda_1$ provide additional $\ell_2$ and $\ell_1$ regularization.
At each time $t$, let $\beta_t$ be the estimated regression coefficient so that 
\[
Y_t = \beta_t^\top \mathbf{Z}_t + \varepsilon_t.
\]

In periods where the underlying dependency structure remains stable, the temporal variation in the regression coefficients $\beta_t$ should reflect only random noise, and hence adjacent coefficients are expected to differ only mildly. In contrast, when a structural change occurs at time $t$, this stability is disrupted, and the coefficients before and after the change exhibit a pronounced deviation. Leveraging this property, we use the magnitude of the temporal jump in $\beta_t$ as the core diagnostic quantity and formulate the following criterion for detecting change points:
\[
D_t = \|\beta_t - \beta_{t-1}\|_2^2,\quad S(D) = \sum_{\text{top }(K-1)} D_t,
\]
that is, the sum of the largest $(K-1)$ squared coefficient jumps.
These corresponding time indices are selected as the estimated change points.
For more refined inference, one may further conduct hypothesis testing 
on individual $\|D_t\|_2^2$ values to obtain $p$-value based decisions.

\subsection{Single change point occasion}

Our proposed MEND framework can flexibly accommodate scenarios with multiple change points. However, when prior knowledge suggests that the data contain at most one single change point and the linear model is dominant, the model can be substantially simplified to improve computational efficiency.

In the single change-point setting, we simplify the neural network $\phi$ to a more parsimonious linear mixed model (LMM)--based statistical formulation that outputs the conditional means before and after the change point. Change-point detection is then carried out using a similar EM-type training procedure.

Formally, let $\tau$ denote an unknown change point. We define
\begin{equation}
\phi(X_i) = \{ m_0(X_i), \; m_1(X_i) \},
\end{equation}
where
\begin{equation}
\begin{aligned}
    m_0(X_i) = \mathbb{E}[Y_i \mid X_i,\; i < \tau], \\
m_1(X_i) = \mathbb{E}[Y_i \mid X_i,\; i \ge \tau].
\end{aligned}
\end{equation}
We refer to this simplified model as \textbf{MEND-mean}. Details of the training procedure are provided in Appendix \ref{app:Simplified}.

Building upon MEND-mean, we further consider a slightly more expressive variant that allows for nonlinear dependence on a single coordinate. This extension incorporates additional information while retaining computational tractability. Specifically, the distilled representation is defined as
\begin{equation}
\phi(X_i) = \{ m_0(X_i), \; m_1(X_i), \; X_B \},
\end{equation}
where $X_B$ denotes the selected coordinate. We refer to this model as \textbf{MEND-repr}. The corresponding training procedure is also described in Appendix \ref{app:Simplified}.

\section{Numerical Simulation}
We first simulate scenarios with complex dependency structures. To evaluate the performance of the proposed model under different distributional regimes, 
we construct a time-series dataset with potential change points and test the model 
in both linear and nonlinear, 
as well as single-change-point and multi-change-point scenarios.
We further estimate the predicted change-point locations 
and compare our method with several benchmark approaches.

We set the total time length to $T = 100$. 
At each time point $t$, we randomly sample the number of observations $n_t$ 
uniformly from the interval $[n_{\min}, n_{\max}]$, 
and construct the covariates as
$\mathbf{X} = [\mathbf{X}_e, \mathbf{X}_i],$
where
\begin{equation}
    \begin{aligned}
        \mathbf{X}_e - 0.5 \sim &\mathcal{N}(0, \Sigma) \in \mathbb{R}^{50}, 
    \Sigma_{ij} = 0.5^{|i-j|}, 
    \\
    & \mathbf{X}_i \sim \mathcal{N}(0, I) \in \mathbb{R}^{50}.
    \end{aligned}
\end{equation}

\paragraph{Linear Change-Point Scenario}
In the linear setting, we generate a coefficient vector $\alpha_k$ for each segment
and define the response variable as
$
Y_i = \alpha_k^\top \mathbf{X}_e + G(\mathbf{X}_e) + \epsilon_i,
$
where the shared nonlinear component is
$
G(\mathbf{X}) = 0.3 X_0^2 + \sqrt{|X_2|}.
$
To control the degree of coefficient variation across segments, we define
\[
\alpha_k = k_s \alpha_k' + (1 - k_s)\alpha_{\mathrm{all}}, \qquad 
\alpha_k', \alpha_{\mathrm{all}} \sim \mathcal{N}(0, I),
\]
where $k_s \in [0, 1]$ is a smoothness-strength parameter.
A smaller $k_s$ indicates that different segments share more similar coefficients 
(simulating mild changes), 
while a larger $k_s$ leads to more pronounced parameter shifts across change points.

\paragraph{Nonlinear Change-Point Scenario}

In the nonlinear setting, we introduce a family of nonlinear transformations
$F = \{f_1, f_2, f_3, f_4, f_5\}$ and a shared linear coefficient $\beta$.
For each segment, one function $f_k$ is randomly selected, and we generate
$
Y_i = k_1 f_k(\mathbf{X}_i) + \beta^\top \mathbf{X}_i + \epsilon_i + c_0,
$
where $k_1$ controls the intensity of nonlinear dependency. More detailed setting can be found in appendix \ref{app:simu1}.

In the CRT experiments, we perform $K = 100$ resampling iterations 
for conditional randomization testing. 
In linear case, we benchmark against classical methods including OLS-CUSUM and MOSUM \citep{ploberger1992cusum, MOSUM}. Unless otherwise stated, we adopt the authors' default hyperparameters provided by the original implementations; For the MOSUM, since it is not design for condition CPD, we applying the univariate detector to the residual process. Concretely, we fit a regularized linear regression of $Y$ on $X$ , obtain residuals $\hat{\varepsilon}_i = Y_i - \widehat{\mathbb{E}}[Y_i \mid X_i]$, and run MOSUM on $\{\hat{\varepsilon}_i\}$. 
In non-linear case, we benchmark against the kernel-based methods for detecting changes in conditional expectations and distributions, referred to as KCE and KCD, respectively \citep{nie2022detection}.

\paragraph{Simulation Result}
In the linear change-point setting, our results, as presented in Figure~\ref{fig:Type2-power-p-val}, indicate that the misdetection probability of all methods gradually increases as $k_s$ decreases. When $k_s$ is relatively large---implying a more pronounced structural difference before and after the change point---our method maintains excellent discriminative sensitivity even when $k_s < 0.5$, achieving a markedly lower false detection rate than the other two baselines. As $k_s$ continues to decrease, the false detection rate of our model increases slightly but remains at a consistently low level, demonstrating clear superiority over the competing approaches.

\begin{figure}[H]
    \centering
    \includegraphics[width=1\textwidth]{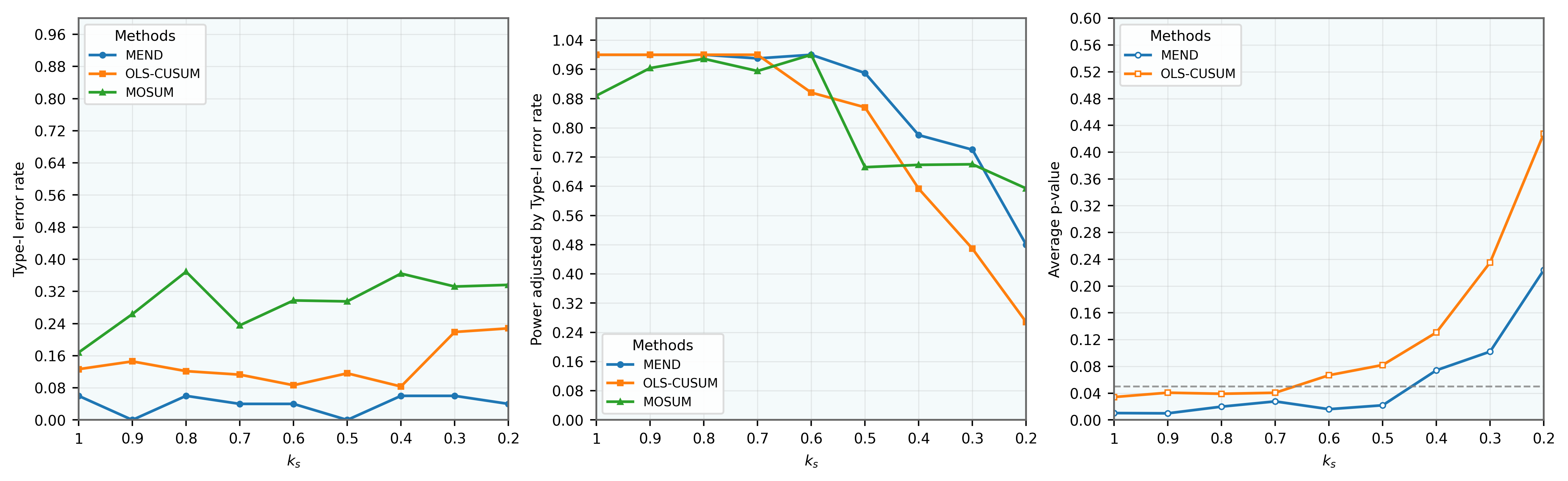}
    \caption{\small{Comparison of Type-$\text{II}$ error rate and average p-value rate across different methods(linear)}}
    \label{fig:Type2-power-p-val}
\end{figure}

In the nonlinear change-point setting, figure~\ref{fig:Type2-power-p-val-non} illustrates the Type-II error rates at significance levels $\alpha = 0.05$ and $\alpha = 0.10$, as well as the average $p$-values across different signal strengths $k_s$.

As the signal weakens (i.e., $k_s$ decreases), all methods exhibit an expected increase in Type-II error. 
However, the MEND relatively maintains the lowest error rates in most $k_s$ both when $\alpha = 0.05$ and $\alpha = 0.10$ . 
Concerning $p$-values ,MEND and KCE both yield low $p$-values (approximately 0.02--0.04) across all $k_s$, 
indicating strong detection sensitivity even under weak linear signals, while KCD yields substantially inflated $p$-values (often exceeding 0.1), failing to identify significant distributional shifts.

Overall, these results demonstrate that the proposed MEND framework achieves the best trade-off between sensitivity and reliability, maintaining stable Type-II error control and low $p$-values even under weak linear signal regimes, 
whereas both kernel-based competitors exhibit pronounced degradation in performance.


\begin{figure}[H]
    \centering
    \includegraphics[width=1\textwidth]{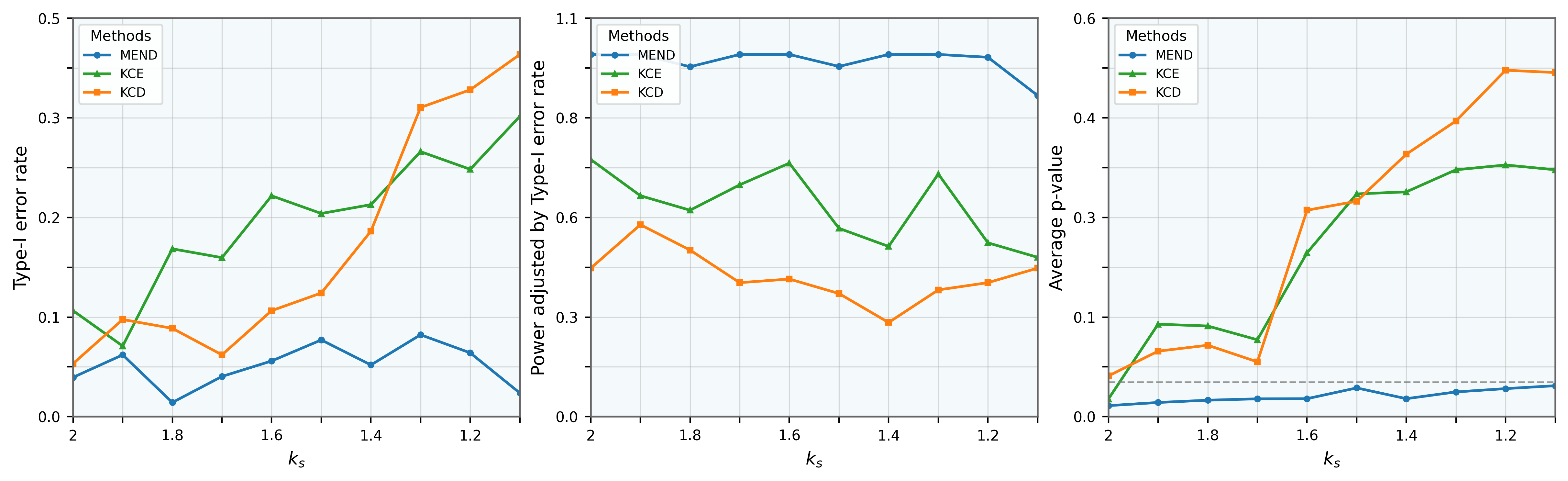}
    \caption{\small{Comparison of Type-$\text{II}$ error rate and average p-value rate across different methods(non-linear)}}
    \label{fig:Type2-power-p-val-non}
\end{figure}

We also conducted simulation data experiments and comparisons on the simplified models \textbf{MEND-mean} and \textbf{MEND-repr} proposed in the previous text. Details can be found in the appendix \ref{app:simu2}.

\section{Real Data Application}\label{sec:application}

\paragraph{Beijing Air Quality Data.}
In this section, we apply the proposed methods to two real-world datasets. The first dataset is the Beijing Air Quality Dataset (2010--2014), originally analyzed in \citet{BeijingPollute}. The data record continuous hourly climate and air pollution measurements from \textit{January 1, 2010} to \textit{December 31, 2014}, including PM2.5 concentration, temperature, atmospheric pressure, wind direction, wind speed, and precipitation conditions.

The original study investigated the response relationship between PM2.5 and environmental variables, aiming to establish a stable pollution classification standard. Building upon this, we analyze whether the conditional dependency of PM2.5 on other meteorological factors (e.g., wind direction, speed, and precipitation) exhibits structural changes over time---potentially associated with major environmental policies, seasonal transitions, or shifts in residential behavior.

According to the referenced study, two major environmental transition events are of particular interest: (i) the annual winter heating season, which substantially increases PM2.5 levels and alters the impact of environmental variables; and (ii) the APEC Summit (Nov 8--11, 2014), during which a series of stringent pollution-control policies were implemented (e.g., \textit{Beijing Air Quality Assurance Plan for APEC 2014}), significantly reducing PM2.5 concentration.


Under a nominal significance level of \( \alpha = 0.05 \), our model identify statistically significant change points. 
displays the estimated test statistic \( D_t \), where four major peaks---corresponding to potential change points---are observed at \textit{Oct 11, 2010}, \textit{Jan 20, 2012}, \textit{Jan 24, 2013}, and \textit{Feb 27, 2014}. These dates align closely with the onset or end of each winter heating period, consistent with the findings in the original study.
When examining the top ten detected points, we observe that the algorithm sensitively captures the seasonal transition intervals including 2010.10.11--2011.02.24, 2011.10.10--2012.01.20, 2013.01.24--2013.03.18 and 2013.10.29--2014.02.27. The official Beijing heating season spans \textit{Nov 15 to Mar 15}, with minor yearly adjustments based on temperature, showing strong agreement with our detection results.

\begin{figure}[H]
    \centering
    \includegraphics[width=0.8\textwidth]{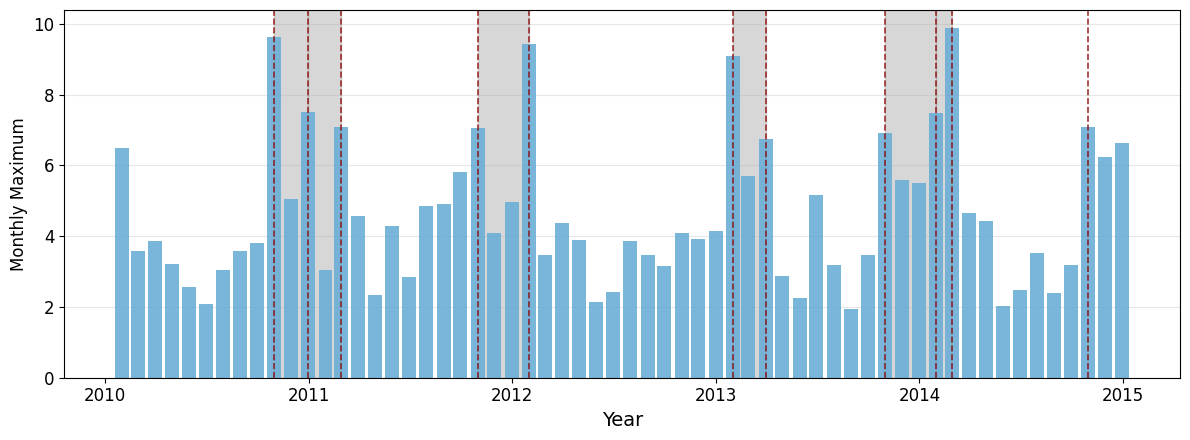}
    \caption{\small{Application in Beijing Air Quality Data}}
    \label{fig:Beijing-Air-Quality-Data}
\end{figure}

Moreover, our model identifies another distinct change point on \textit{Oct 26, 2014}, corresponding precisely to the APEC Summit period (Nov 8--11, 2014), during which strict pollution-control measures---including traffic restrictions and factory shutdowns---were enforced. This further validates the effectiveness and interpretability of our method in detecting policy-driven environmental changes.


\paragraph{Duke Electronic Health Records Data.}
We also apply the developed methods to real-world electronic health records (EHR) data drawn from the Duke University Health System via the Duke Clinical Research Datamart, including all emergency department (ED) visits in $2022$. The outcome of interest is a binary variable indicating whether a visit results in inpatient admission to the hospital. The predictors include demographic variables (age, sex, race), vital signs (pulse rate, systolic blood pressure, diastolic blood pressure, oxygen saturation, temperature, respiration rate), and selected comorbidities (local tumor, metastatic tumor, diabetes with complications, diabetes without complications, and renal disease), defined according to ICD codes.

EDs operate in a dynamic clinical environment, and there can be a seasonal model shift in these data—for instance, higher ED visits leading to hospitalization during peak influenza season \citep{Menec2003}. Thus, our goal is to determine whether the proposed MEND algorithm can detect temporal shifts of this nature. We randomly selected a subsample of $n=1,000$ observations from the full cohort. To avoid possible multiple change points within a single year, we further restricted our analysis to data collected from April through December $2022$.

We first conduct a pseudo-simulation based on the covariates in real data and assume that there is no change point. Type-I error for MEND is 0.029. Type-I error for all the other methods are smaller than 0.05. This indicates a good quality of modeling $R\mid \mathbf{X}$. Details of the pseudo-simulation can be found in Appendix \ref{sec:pseudo}.

We then applied the methods to the original data set. Table \ref{tab2} shows the estimated change point and corresponding $p$-values of all methods. Our proposed method MEND give small $p$-value with estimated change point 11, indicating that there is a change in November. This align with the flu season which potentially lead to the data shift. Baseline DP gives the same point estimation as the proposed method, and other baseline methods have large $p$-value that not able to identify significant change point. These results illustrate that our method shows higher statistical power than existing methods;


\begin{table}[!htb]
  \setlength\tabcolsep{2pt}
  \renewcommand\arraystretch{1}
  \centering
  \small
    \caption{\small{Estimated change points and $p$-values of different methods in our real-world example}}  \label{tab2}
    \begin{tabular}{c|cccccc}
      \hline
      &MEND & KCE &KCD &DP & OLS-CUSUM \\
      \hline\hline
Estimated $\tau$ (month) &11 &6 &8 &11 &11 \\
 $p$-value &0.029 & 0.742 &0.6 & - &0.55\\
   \hline
    \end{tabular}
\end{table}

\section{Conclusion}

In this paper, we propose a novel method for detecting change points in conditional distributions, 
built upon the Conditional Randomization Test (CRT) framework and enhanced with a neural network–based distillation function. 
The proposed approach is applicable to high-dimensional and complex dependence settings, 
while effectively alleviating the computational burden of CRT caused by high-dimensional feature spaces. 
Moreover, it avoids strong parametric assumptions on the conditional distribution of the response, 
achieving robust and flexible inference across diverse data-generating mechanisms. 
We further introduce two simplified variants of the model designed for low-dimensional or single-change scenarios, 
which can efficiently complete the detection task with reduced computational complexity. 
Both theoretical analysis and numerical studies demonstrate the statistical validity and empirical effectiveness of the proposed framework.

Nevertheless, similar to other neural-network–based methods, 
the selection of hyperparameters remains a nontrivial and potentially sensitive process, 
which constitutes one of the current limitations of our approach. 
Additionally, due to the use of deep distillation representations, 
the model lacks interpretability and cannot provide an explicit functional form of the dependency between \( X \) and \( Y \).

Future research may extend the current framework toward distinguishing between 
\textit{structural} and \textit{numerical} change points, 
thereby identifying whether the dependency structure or the magnitude of the relationship has shifted. 
Such extension would require stronger supervision or structural priors 
and may further enhance the interpretability and diagnostic power of the proposed method.

\newpage
\bibliographystyle{plainnat}
\bibliography{references}

\newpage
\appendix
\onecolumn
\section{Stabilized EM Training and IRM Regularization}
\label{app:alg1}

The EM algorithm we apply to train $\phi$ can be represented as follow:

\paragraph{E-step.}
Given the current parameters $(\phi, \omega_k, \sigma_k^2)$, we compute the responsibility of each sample:
\begin{align*}
P[E_i = k \mid (\mathbf{x}_i, y_i)]
&= \frac{P[(\mathbf{x}_i, y_i) \mid E_i = k] \, P(E_i = k)}
{\sum_{k'} P[(\mathbf{x}_i, y_i) \mid E_i = k'] \, P(E_i = k')} \\[6pt]
&= \frac{P[(\mathbf{x}_i, y_i) \mid E_i = k]}
{\sum_{k'} P[(\mathbf{x}_i, y_i) \mid E_i = k']} \\[6pt]
&= 
\frac{
\mathcal{N}\!\big(y_i - \omega_k^\top \phi(\mathbf{x}_i), \sigma_k^2 \big)
}{
\sum_{k'} 
\mathcal{N}\!\big(y_i - \omega_{k'}^\top \phi(\mathbf{x}_i), \sigma_{k'}^2 \big)
} \\[6pt]
&= p_{ik}.
\end{align*}

\paragraph{M-step.}
Using the soft assignments $\{p_{ik}\}$, we update both the encoder and decoders 
by minimizing the weighted reconstruction loss:
\[
L = L_{\text{MSE}} + \lambda_1 L_{\text{reg}}
= \sum_{i=1}^{n}\sum_{k=1}^{K}
p_{ik}\, \big(y_i - \omega_k^\top \phi(\mathbf{x}_i)\big)^2
+ \lambda_1 \big(\|\phi\|_2^2 + \|\omega\|_2^2\big),
\]
where $\lambda_1$ is a regularization coefficient controlling overfitting.
Both $\phi$ and $\{\omega_k\}$ are updated via gradient descent.
Mini-batch training can be adopted for large datasets, 
and additional normalization or regularization can be applied to stabilize convergence.

The variance parameters $\sigma_k^2$ admit closed-form updates:
\[
Q(\sigma_k^2) = \sum_i p_{ik} \log P(y_i \mid \mathbf{x}_i),
\quad
\frac{\partial Q}{\partial \sigma_k^2} = 0
\;\;\Rightarrow\;\;
\sigma_k^2 = 
\frac{\sum_i p_{ik} (y_i - \omega_k^\top \phi(\mathbf{x}_i))^2}
{\sum_i p_{ik}}.
\]

During the EM training of $\phi$, we introduce the idea of \textbf{Invariant Risk Minimization (IRM)} to further stabilize the learning process.  
The overall loss function is modified as
\begin{equation}
L = L_{\mathrm{MSE}} + \lambda_1 L_{\mathrm{reg}} + \lambda_2 L_{\mathrm{env}},
\end{equation}
where
\begin{equation}
L_{\mathrm{env}} = \left\| \nabla_{\bar{\omega}} L_{\mathrm{MSE}}(\bar{\omega} \cdot \phi) \right\|^2, 
\quad 
\bar{\omega} = \sum_{k} \bar{p}_{k} \omega_k, 
\quad 
\bar{p}_{k} = \sum_i p_{ik}.
\end{equation}

The first term $L_{\mathrm{MSE}}$ is the standard mean squared error loss for prediction,  
the second term $L_{\mathrm{reg}}$ is a simple $\ell_2$ regularization term,  
and the third term $L_{\mathrm{env}}$ serves as a stabilization component inspired by the IRM principle.

Specifically, for each data sample $(x_i, y_i)$, it is probabilistically assigned to multiple linear heads,  
and the weighted summation of their outputs is taken as the final prediction.  
Under this stochastic assignment, $L_{\mathrm{env}}$ measures the squared $\ell_2$-norm of the gradient of $L_{\mathrm{MSE}}$  
with respect to this ``mixture of linear heads.''  
This formulation is analogous to the IRM-v2 objective:
\begin{equation}
\sum_e \left\| \nabla_{\omega} R^e(\omega \cdot \phi) \right\|^2,
\end{equation}
with the only difference being that the environment index $e$ is replaced here by a softer probabilistic representation,  
where $\bar{p}_k$ acts as weighting coefficients across different linear heads.

\section{Simplified Model for the Single-Change-Point Case}
\label{app:Simplified}

\subsection{MEND-mean} 
When it is known a prior that the data contains at most one change point 
and the dependency between $Y$ and $\mathbf{X}$ is approximately linear, 
it is unnecessary to employ complex neural architectures. 
Instead, we adopt a simplified conditional expectation model for both modeling and distillation. 
This corresponds to using a single-layer linear network as the distillation function $\phi$.

The simplest form of the distillation function, denoted by $\phi_0(\mathbf{X})$, 
is constructed from the conditional means estimated via the EM algorithm. 
Specifically, the EM step outputs two estimated conditional means, 
$\hat{m}_0(\mathbf{X})$ and $\hat{m}_1(\mathbf{X})$, 
which correspond to the pre-change and post-change regimes, respectively.
The distilled representation is defined as

\[
\phi_0(\mathbf{X}) = \{\hat{m}_0(\mathbf{X}), \hat{m}_1(\mathbf{X})\},
\]
\[
m_0(X_i) = \mathbb{E}[Y_i \mid X_i,\; i < \tau], \qquad   m_1(X_i) = E[Y_i|X_i, i\ge \tau ]
\]

which concatenates the two regime-specific conditional expectations of $Y$.

For each candidate change point $\tau$, 
we partition the data into subsets $\{R_i \le \tau\}$ and $\{R_i > \tau\}$,
and model the observations using the conditional mixture
\[
Y_i \sim a \, \hat{m}_0(\mathbf{X}_i) + (1 - a)\, \hat{m}_1(\mathbf{X}_i),
\]
where $a$ represents a regime-specific mixing coefficient. 
For each side of $\tau$, we estimate $a$ by solving the following regularized least-squares problems:
\begin{equation}
\label{eq:coefficients}
\begin{aligned}
\hat{a}_{\tau-} &= \arg\min_a 
\sum_{R_i \le R_\tau} \left( Y_i - a \hat{m}_0(\mathbf{X}_i) - (1 - a) \hat{m}_1(\mathbf{X}_i) \right)^2 
+ \lambda (a - 0.5)^2, \\[4pt]
\hat{a}_{\tau+} &= \arg\min_a 
\sum_{R_i > R_\tau} \left( Y_i - a \hat{m}_0(\mathbf{X}_i) - (1 - a) \hat{m}_1(\mathbf{X}_i) \right)^2 
+ \lambda (a - 0.5)^2,
\end{aligned}
\end{equation}
where $\lambda$ is a small regularization parameter introduced to stabilize the coefficient estimation,
especially when one side of the split contains relatively few samples.

The divergence between the two regimes is measured by the difference of their estimated mixing coefficients.
The resulting test statistic is defined as
\[
S(\mathcal{D}) = 
\max_{\tau \in \{1, \dots, T-1\}} 
\left| \hat{a}_{\tau-} - \hat{a}_{\tau+} \right|.
\]
The $d_0$ approach relies solely on the conditional means of $Y$,
thereby achieving high computational efficiency 
while remaining effective in scenarios where the conditional mean captures 
the essential dependency structure between $Y$ and $\mathbf{X}$.

\subsection{MEND-LAD (repr)}

Moreover, we can artificially introduce additional variables into the linear-layer function to strengthen the specific statistical properties of interest. If we wish to emphasize the oscillatory behavior of the function, we can include the change in variance before and after the change point as part of the fitting process. If our focus lies on the influence of noise perturbations, we may incorporate the learning and estimation of noise coefficients within the distillation function $\phi$. When the relationship exhibits strong linearity, we can even directly output certain selected components of $X$ to enhance the linear characteristics.
In this section, under the simple single–change–point scenario, we introduce an additional method:
Specifically, let $B_0$ and $B_1$ denote the sets of indices corresponding to the top $k$ features (e.g., based on coefficient magnitude) from the fitted models $\hat{m}_0$ and $\hat{m}_1$, respectively. Then, we define 
 $
 \phi(\mathbf{X}) = \left\{ \hat{m}_0(\mathbf{X}),\ \hat{m}_1(\mathbf{X}),\ \mathbf{X}_{:B_0 \cup B_1} \right\}. 
 $
Our framework is designed to be flexible and can incorporate any such representation that facilitates sensitive and robust inference.

\section{Proofs}
\label{appendix:proof}
\begin{proposition}
\label{prop:crt}    
Assume we have sequential data $D = \{(Y_{i}, \mathbf{X}_i, R_i)\}_{i=1}^{n}$ Given $R_i$ with each observation $(Y_{i}, \mathbf{X}_i)$ follows the generative model: 
\begin{equation}
Y,\mathbf{X} \mid R \sim  p_{\mathbf{X}|R} (\mathbf{X})   p_{Y|\mathbf{X}}(Y),
\end{equation}
Now define $D^{*} = \{(Y_i, \mathbf{X}_i, R_i^{*})\}_{i=1}^{n}$, where each $R_i^{*} \sim p_{R|\mathbf{X}_i}(t)$, then any test statistic $S(D)$ obeys:
\begin{equation}
S(D^*)|(Y,\mathbf{X}) \stackrel{d}{=} S(D)|(Y,\mathbf{X})
\end{equation}
\end{proposition}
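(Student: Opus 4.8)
The plan is to reduce the statement to an equality in conditional law of the time-label vectors and then push it forward through $S$. Write $\mathbf{R}=(R_1,\dots,R_n)$ and $\mathbf{R}^{*}=(R_1^{*},\dots,R_n^{*})$. Since $\Dsc$ and $\Dsc^{*}$ are the \emph{same} deterministic (measurable) function of the pair $\big((\Y,\X),\ \text{label vector}\big)$, and $S$ is measurable, it suffices to prove
\[
\mathbf{R}^{*}\mid(\Y,\X)\ \stackrel{d}{=}\ \mathbf{R}\mid(\Y,\X),
\]
from which $S(\Dsc^{*})\mid(\Y,\X)\stackrel{d}{=}S(\Dsc)\mid(\Y,\X)$ is immediate.

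First I would identify the conditional law of each $R_i$ given the full data. Under the stated generative model $p(Y_i,\X_i\mid R_i)=p_{\X|R_i}(\X_i)\,p_{Y|\X}(Y_i)$, so Bayes' rule gives $p(Y_i\mid\X_i,R_i)=p_{Y|\X}(Y_i)$, i.e.\ $Y_i\indep R_i\mid\X_i$; by symmetry of conditional independence this yields $p(R_i\mid Y_i,\X_i)=p(R_i\mid\X_i)=p_{R|\X_i}$, which is precisely the sampling distribution used to draw $R_i^{*}$. Next, because the $n$ triples $(Y_i,\X_i,R_i)$ are mutually independent, conditioning on the entire vector $(\Y,\X)$ neither creates dependence across $i$ nor makes $R_i$ depend on $(Y_j,\X_j)$ for $j\neq i$; hence $\mathbf{R}\mid(\Y,\X)\sim\prod_{i=1}^{n}p_{R|\X_i}(\cdot)$. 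On the other side, the $R_i^{*}$ are drawn independently across $i$ from $p_{R|\X_i}$ and independently of $(\Y,\mathbf{R})$, so $\mathbf{R}^{*}\mid(\Y,\X)\sim\prod_{i=1}^{n}p_{R|\X_i}(\cdot)$ as well. The two conditional laws agree, giving the displayed identity, and applying the functional $S$ finishes the proof.

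I expect the only delicate points to be measure-theoretic bookkeeping rather than substance: one must treat $R$ as a random label so that $p_{R|\X}$ is well defined (say as the Bayes posterior from a joint law over $(R,\X)$ built from the prescribed $\X\mid R$ conditionals together with any label marginal), and one must justify carefully that conditioning on all of $(\Y,\X)$ preserves the product-over-$i$ structure — this is exactly where independence of the samples is used. It is also worth stating in the proof where the null enters: the identification $p(R_i\mid Y_i,\X_i)=p_{R|\X_i}$ uses $Y_i\indep R_i\mid\X_i$, which fails under $H_1$, so the equality in distribution (and the validity of the resulting CRT $p$-value) is a null-only phenomenon.
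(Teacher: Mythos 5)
Your proof is correct and follows essentially the same route as the paper's: derive $Y\indep R\mid \X$ from the generative model, conclude $R^{*}\mid(\Y,\X)\stackrel{d}{=}R\mid\X\stackrel{d}{=}R\mid(\Y,\X)$, and push the equality of conditional laws through the measurable map $S$. Your version is in fact slightly more careful than the paper's, since you explicitly invoke independence across the $n$ samples to justify that conditioning on the full vector $(\Y,\X)$ preserves the product form $\prod_{i=1}^{n}p_{R|\X_i}$ — a step the paper glosses over by writing the argument for a single generic $(Y,\X,R)$.
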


\begin{proof}[Proof of Proposition \ref{prop:crt}]
Under the model, given \( R \), the covariates \( \mathbf{X} \) are generated first from \( p_{\mathbf{X}|R}(\mathbf{X}) \), and then the response \( Y \) is generated from \( p_{Y|\mathbf{X}}(Y) \) independently of \( R \). This implies the conditional independence:
\begin{equation}
Y \perp R \mid \mathbf{X}
\end{equation}
To prove the claim, it suffices to show that \( R \) and \( R^* \) have the same distribution conditionally on \( (Y, \mathbf{X}) \). This follows from

\[
R^* \mid (Y, \mathbf{X}) \stackrel{d}{=} R \mid X \stackrel{d}{=} R \mid (Y, \mathbf{X}) .
\]

The first equality comes from the definition of \(R^* \) while the second follows from the conditional independence of \( Y \) and \( R \).
and therefore any statistic computed from \( (Y, \mathbf{X}, R) \) has the same distribution as one computed from \( (Y, \mathbf{X}, R^*) \), conditionally on \( (Y, \mathbf{X}) \)
\end{proof}

\subsection{Proof of Theorem \ref{theo:valid_test}}
Assume the null hypothesis
\[
H_0: p_{Y|\mathbf{X},R}(y) = p_{Y|\mathbf{X}}(y), \quad \text{for all } R \in \{1, \ldots, T\},
\]
holds. If we use the test statistic \( S(D) \) and the resampled data sets \(D^{(k)}\) defined in Algorithm~\ref{alg:crt_modified}, then by Proposition~\ref{prop:crt}, we have:
\begin{equation}
S(D^{(k)}) \mid (Y, \mathbf{X}) \stackrel{d}{=} S(D) \mid (Y, \mathbf{X}), \quad \text{for all } k = 1, \ldots, K.
\end{equation}

Therefore, the $p$-value is defined as 
\[
p\text{-value} = \frac{1}{K+1} \left( 1 + \sum_{k=1}^K \mathbb{I}\big(S(D^{(k)}) \geq S(D)\big) \right)
\]
is valid under \( H_0 \). Specifically,
\[
P(p\text{-value} \leq \alpha) = P\left(S(D^{(k)}) > S(D) \text{ for less than } \alpha (K+1) \text{ values of } k\right) \leq \alpha.
\]
Hence, under \( H_0 \),
\[
\lim_{n \to \infty} \mathbb{P}\left(\textnormal{Algorithm~\ref{alg:crt_modified} rejects}\right) \leq \alpha.
\]

\section{Numerical Simulation on MEND-mean and MEND-repr}
\label{app:simu2}
\subsection{Simuation settings}

We conduct several simulation test on our simplified models.
Specifically, we consider a total of \(T=10\) time points and set the change point position at \(\tau=7\). At each time point $t$, $n_t=100$ observations are generated, with a total sample size of \(n=1000\). The \(X\) generation is the same as Linear Change-Point Scenario above.

For the generation of $Y$, we consider the following three scenarios, where the first two scenarios are low-dimensional nonlinear models and the third scenario is high-dimensional linear model.

{\bf Scenario 1.} Set
$
    Y_{i}= \mathbf{X}_i^\top \alpha^*  + \delta_1 x_{i1}^2 +\epsilon_{i},
$
where $\boldsymbol{\alpha}^*=(0.5,-0.5, 0.5, 0.5, -0.5, 0, \ldots,0)\in \mathbb{R}^{20}$, and the  strength of the nonlinear perturbation $\delta_1$ is varied over $\{0.1, 0.2, 0.3, 0.4, 0.5\}$ to evaluate the ability of each method to control the Type-I error inflation under increasing  nonlinear effects. The noise term \(\epsilon_i\) is independently sampled from \(N(0,1)\), which remains consistent across all scenarios. 
Furthermore, we conduct an additional experiment by fixing $\delta_1=0$ and varying $N$ to assess the quality of model-X in our proposed methods.

{\bf Scenario 2.} Set $Y_{i} = \boldsymbol{S}(\mathbf{X}_i) \alpha^* + \epsilon_{i}$, if $ R_i \le \tau$, and  $Y_{i} = \boldsymbol{S}(\mathbf{X}_i) (\alpha^* + \delta_2\beta) + \epsilon_{i}$, if $ R_i > \tau$,

where $\mathbf{X}_i \in \mathbb{R}^5$ is drawn from $N(g(R_i), \boldsymbol{\Sigma})$ with $g(R_i) = \mathbf{0}$ for $R_i \leq 5$ and $g(R_i) = 0.2 R_i \mathbf{1}_5$ otherwise. The transformation $\boldsymbol{S}(\mathbf{X}_i) = (\sin(x_{i1}), x_{i2}^3, x_{i3}^2, x_{i4}, x_{i5}^2)$ induces nonlinear effects. Here, we set $\boldsymbol{\alpha}^* = (0.5, -0.5, 0.5, 0.5, -0.5)$ and  $\boldsymbol{\beta}=(0.05, 0.05, 0.05, 0.05, 0.05)$. The signal strength of structural change $\delta_2$ is set as $0$ and $3$ to examine the ability to control Type-I error in the presence of covariate shift without structural change, and to assess the power to detect true structural change under a nonlinear model, respectively.

{\bf Scenario 3.} Set $ Y_{i}=\mathbf{X}_i^\top \alpha^* +\epsilon_{i}$, if $ R_i \le \tau$, and $ Y_{i}=  \mathbf{X}_i^\top (\alpha^*+\delta_3\beta) +\epsilon_{i}$, if $ R_i > \tau$, where  $\boldsymbol{\alpha}^*=(0.5,-0.5, 0.5, 0.5, -0.5, 0, \ldots,0)\in \mathbb{R}^{100}$, $ \boldsymbol{\beta}=(0.05, 0.05, 0.05, 0.05, 0.05, 0, \ldots,0)\in \mathbb{R}^{100}$,

and the signal strength of the change in the regression coefficients $\delta_3$ is varied over $\{0,1,2,3,4,5\}$ to evaluate the power and localization performance under increasing structural change strength in high-dimensional settings.

Similarly, we compare MEND-mean and MEND-repr with three benchmark approaches:
(1) OLS-CUSUM for detecting the changes in low-dimensional parametric models ,
(2) the dynamic programming method  (DP) for localizing changes in high-dimensional parametric models  \citep{rinaldo2021localizing}, and
(3) KCE and KCD. To evaluate the performance of the change point detection and localization, we investigate the Type-I error rate and power of testing $H_0$ against $H_1$, the localization accuracy rate, and the run time. The localization accuracy rate is defined as the proportion of replications in which the estimated change point exactly coincides with the true change point. The significance level is set to $\alpha=0.05$. 

Following Nie and Nicolae (2022), the p-values in KCE and KCD is determined by 500 bootstraps.
Since KCE and KCD are not directly applicable in high-dimensional settings, we perform variable
screening using the SCAD penalty (Fan and Li, 2001) before applying these methods to ensure
efficient and accurate analysis in the high-dimensional setting. For the kernel bandwidth h, it is
tuned among $ S_h = \{0.001, 0.1, 0.2, 0.3, 0.4, 0.5, 1, 3, 5, 10, 15\}$.
The bandwidths h for KCD and KCE are set as follows

\begin{table}
  \setlength\tabcolsep{2pt}
  \renewcommand\arraystretch{1.2}
  \centering
    \caption{The selected bandwidth $h$ for KCD and KCE}  \label{tab3}
    \begin{tabular}{c|c|c|c}
      \hline
   $h$ &Scenario 1 & Scenario 2 &Scenario 3 \\
     \hline\hline
   h-KCD &0.4   &0.5 &0.5 \\
   \hline
   h-KCE &0.2   &0.2 &0.1 \\
   \hline
    \end{tabular}
\end{table}

\paragraph{Simulation Result} In Figure~\ref{fig:Type1-power-accuracy}, we compare the Type-I error control in Scenario 1 and the detection power and localization accuracy in Scenario 3 of our methods with the benchmark methods. Since some methods fail to control the Type-I error rate, we report the power adjusted by subtracting the Type-I error rate in Figure~\ref{fig1} to ensure a fair comparison. The Type-I error rate performance for $\delta_3=0$ in Scenario 3 is presented in Table \ref{tab4}.  Since DP is not designed for change-point testing, we exclude it from Figures~\ref{fig3}-\ref{fig1} and Table \ref{tab4}.  Furthermore, given that OLS-CUSUM is designed for low-dimensional model,  we only report the performance of other methods in Scenario 3. As shown in Figures~\ref{fig:Type1-power-accuracy} and Table~ \ref{tab4}, our proposed methods consistently outperform the benchmark methods across all evaluation metrics and scenarios. To be specific, Figure~\ref{fig3} demonstrates that the proposed method MEND maintains more stable Type-I error control as the signal strength of the nonlinear term $\delta_1$ increases. In contrast, the Type-I error rate of OLS-CUSUM inflates with the strength of the  nonlinear effect, while both KCE and KCD exhibit severe Type-I error inflation, even when the nonlinear effect is small. For instance, when $\delta_1=0.3$, the benchmark methods KCD, KCE, and OLS-CUSUM exhibit Type-I error rates of 0.48, 0.4, and 0.1, respectively, while the proposed MEND-mean method achieves a much lower rate of 0.046.


\begin{figure}[t]
    \centering

    \begin{subfigure}[b]{0.3\textwidth}
        \centering
        \includegraphics[width=\textwidth,height=3.8cm]{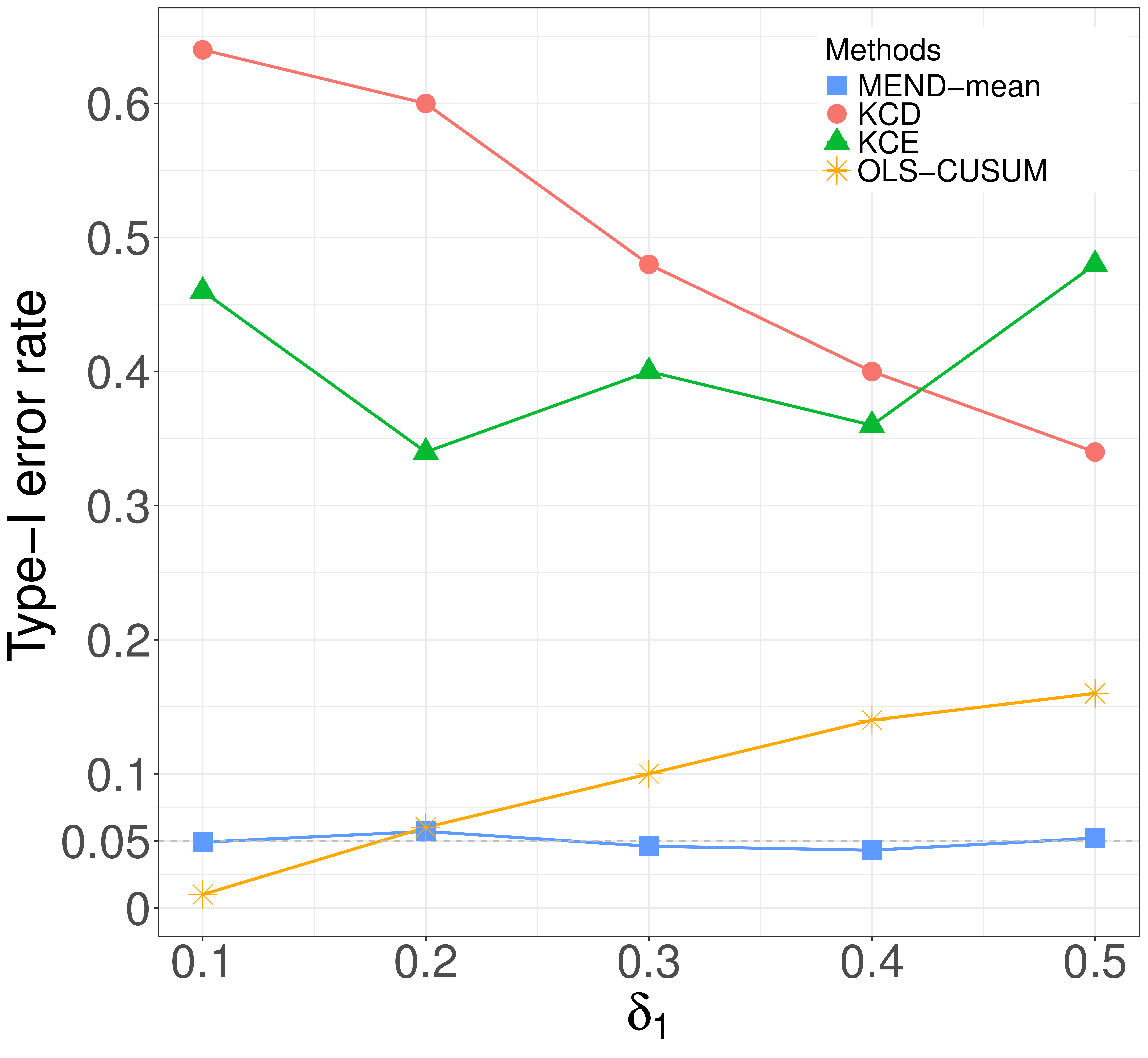}
        \caption{\small Type-I error rate in Scenario 1 with increasing nonlinear effect size $\delta_1$}
        \label{fig3}
    \end{subfigure}
    \hfill
    \begin{subfigure}[b]{0.3\textwidth}
        \centering
        \includegraphics[width=\textwidth,height=3.8cm]{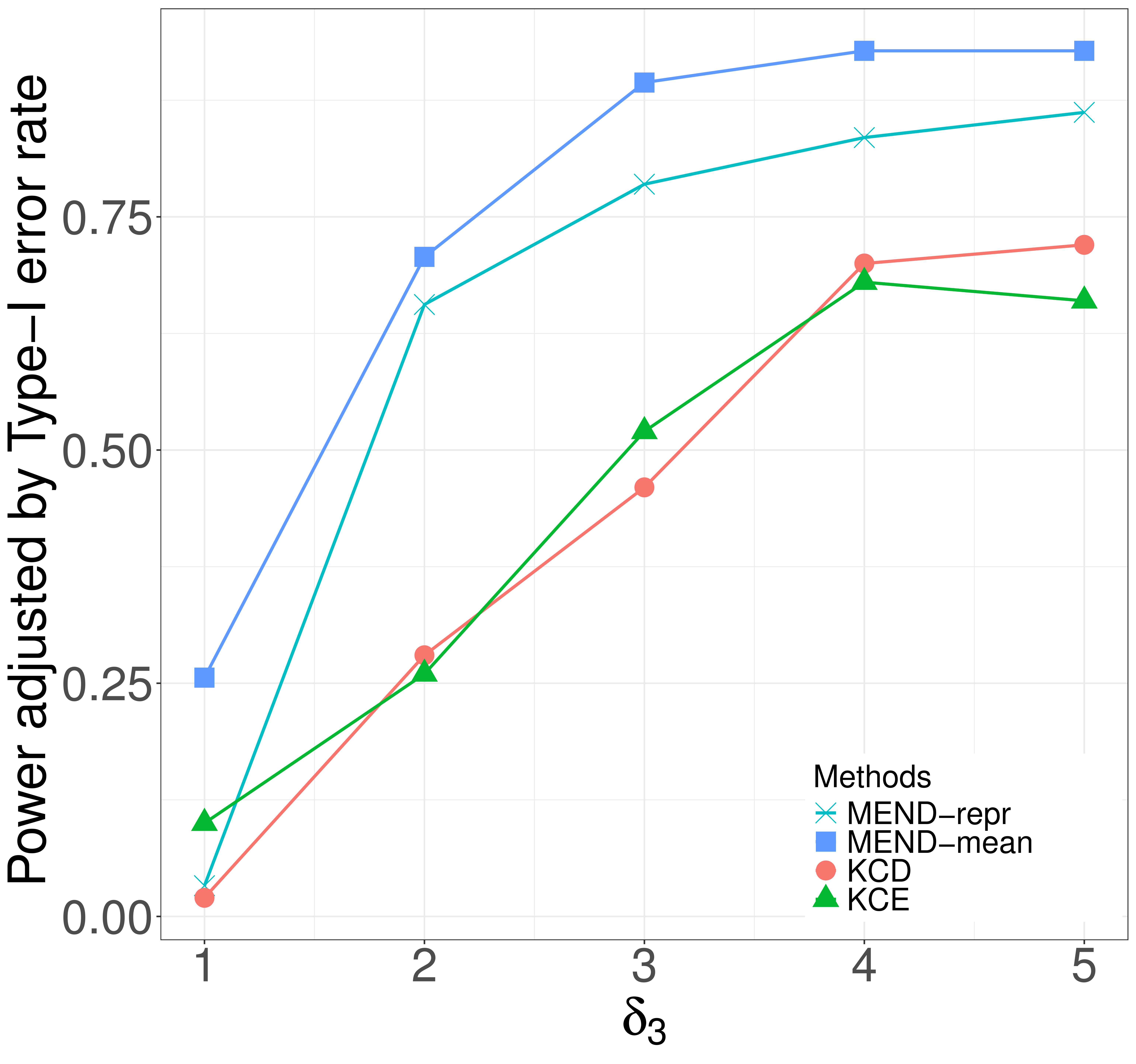}
        \caption{\small Power adjusted by Type-I error rate in Scenario 3 with increasing $\delta_3$}
        \label{fig1}
    \end{subfigure}
    \hfill
    \begin{subfigure}[b]{0.3\textwidth}
        \centering
        \includegraphics[width=\textwidth,height=3.8cm]{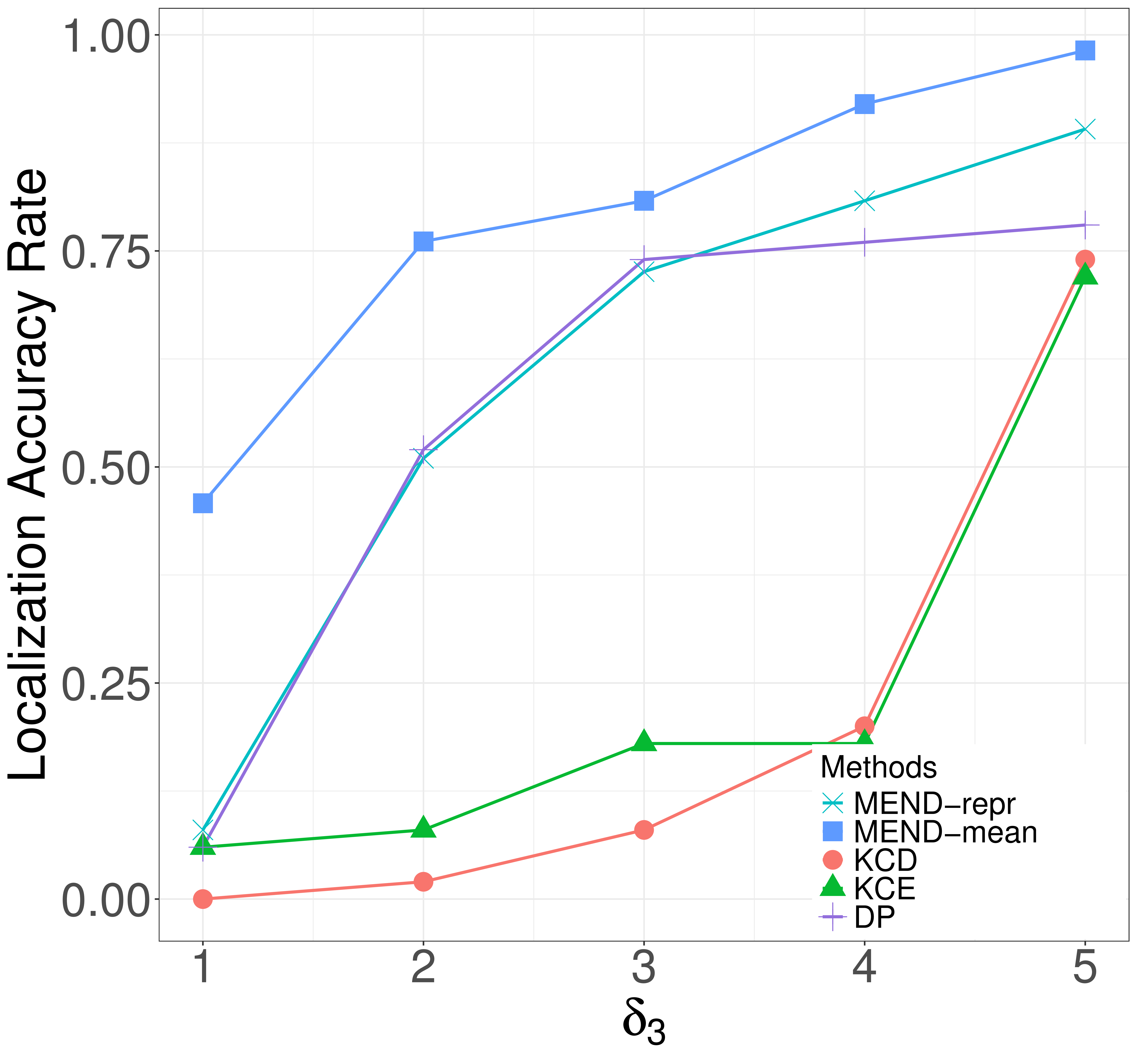}
        \caption{\small Localization accuracy rate in Scenario 3 with increasing  $\delta_3$}
        \label{fig2}
    \end{subfigure}

    \caption{\small Comparison of Type-I error rate, power and accuracy rate across different scenarios and methods (linear)}
    \label{fig:Type1-power-accuracy}
\end{figure}

\begin{table}[H]
  \setlength\tabcolsep{2pt}
  \renewcommand\arraystretch{1}
  \centering
  \small
    \caption{\small{Averaged Type-I error rates of different methods over 500 replications in Scenario 3}}  \label{tab4}
    \begin{tabular}{c|cccc}
    
      \hline
   $\delta_3$      &$\text{MEND-mean}$   & MEND-repr    &KCD  &KCE                            \\
     \hline\hline
     0     &0.054 &  0.057 &  0.28 &0.24 \\
   \hline
    \end{tabular}
\end{table} 
In view of Figures~\ref{fig1}-\ref{fig2}, both the power and the localization accuracy rate of all methods increase as the signal strength of the structural change $\delta_3$ in the regression coefficient grows. However, our proposed methods MEND-mean and  MEND-repr have uniformly higher power and localization accuracy rate than the benchmark methods. For example, when $\delta_3=3$, KCD and KCE achieve powers of $0.74$ and $0.76$ and localization accuracy rates of $0.08$ and $0.18$, respectively, while MEND-mean and  MEND-repr  attain substantially higher powers of $0.959$ and $0.842$ and localization accuracy rates of $0.808$ and $0.726$. Similarly, Table~\ref{tab4}  shows that the proposed MEND-mean and  MEND-repr methods achieve significantly lower Type-I error rates of $0.054$ and $0.057$ compared to $0.28$ for KCD and $0.24$ for KCE, indicating much better error control under the null setting.

\begin{table}[H]
  \setlength\tabcolsep{2pt}
  \renewcommand\arraystretch{1}
  \centering
  \small
    \caption{\small{Averaged results of different methods over 500 replications in Scenario 2} } \label{tab1}
    \begin{tabular}{cc|ccccc}
    
      \hline
   $\delta_2$ &     &MEND-mean   &KCD  &KCE    &DP  &OLS-CUSUM                          \\

     \hline\hline
     0    &Type-I error rate &0.054 &  0.63 &0.47    & -  &0.51\\
     \hline
   3             & Power        &0.65  &0.76  &0.39   & -   &1  \\
                & Localization Accuracy Rate &0.65 &0.07    &0    &0.01  &0.58      \\
   \hline
    \end{tabular}
\end{table}

Table \ref{tab1} compares the Type-I error rate, detection power, and the localization accuracy of different methods in Scenario 2. It shows that our proposed method MEND-mean achieves the best overall performance. It maintains the lowest Type-I error rate of $0.054$ when the signal strength of structural change $\delta_2=0$, while all other methods suffer from severe Type-I error inflation, with rates exceeding $0.45$. As $\delta_2$ increases to $3$, the proposed method attains a desirable balance between the power and the localization accuracy rate, whereas the benchmarks exhibit either low power or poor localization accuracy.

\section{Details on Numerical Studies}
\label{app:simu1}
In the Nonlinear Change-Point Scenario in section 4.1, the nonlinear function set $F$ is defined as:
\[
\begin{aligned}
f_1(\mathbf{X}) &= 0.6|X_0|^2 + 2\sqrt{|X_2|},\\
f_2(\mathbf{X}) &= 0.3|X_0|X_1 + 1.5\sqrt{|X_2|},\\
f_3(\mathbf{X}) &= 0.4|X_0| + \sqrt{|X_2|} + e^{X_0/2},\\
f_4(\mathbf{X}) &= 0.2|X_4| + 1.5\sqrt{|X_2|} + e^{(X_0 - |X_4|)/2},\\
f_5(\mathbf{X}) &= 0.3|X_4| +  e^{(X_0 - |X_4|/2)/2}.
\end{aligned}
\]

The shared encoder (distillation head) $\phi$ is implemented as a 
four-layer fully connected neural network, 
with ReLU activations applied to the first two layers to enhance nonlinearity. 
For the fused lasso training, we fix the regularization parameters as:
$
\lambda_{\mathrm{fuse}} = 1, 
\lambda_{l1} = 1, 
\lambda_{\mathrm{ridge}} = 0.1.
$

\section{Details on Beijing PM2.5 dataset test}
We first preprocess the data, remove missing values, and apply one-hot encoding to categorical features. We treat the PM2.5 concentration as the response variable \( Y \), and all remaining features as predictors \( X \). The resulting dataset contains \( 41{,}757 \) samples with an initial feature dimension of \( p = 10 \). Data collected within the same day are aggregated and treated as belonging to the same time index \( T \). 

The shared encoder (distillation head) $\phi$ is implemented as a 
four-layer fully connected neural network, 
with ReLU activations applied to the first two layers to enhance nonlinearity. 
For the fused lasso training, we fix the regularization parameters as:
$
\lambda_{\mathrm{fuse}} = 1, 
\lambda_{l1} = 0.5, 
\lambda_{\mathrm{ridge}} = 0.5.
$

\section{Details on Pseudo-simulation}
\label{sec:pseudo}
For the pseudo simulation, the covariates $\mathbf{X}_i$ are taken from the real data. We assume there is no change point, meaning that the data across all months follow the same underlying model. Pseudo outcomes are generated from a binomial distribution:

$$Y_\text{pseudo,$i$}\sim \text{Binomial}\left(\frac{\exp(\hat\eta^{T}\mathbf{X}_i)}{1+\exp(\hat\eta^{T}\mathbf{X}_i)}\right)$$

where $\hat\eta$ is a fixed coefficient vector applied uniformly across all months. The Type-I error rate is estimated by calculating the $p$-values for each method using $\mathbf{X}_i$ and $Y_{\text{pseudo},i}$, and repeating this process 500 times.








\end{document}